\newtheorem{theorem}{Theorem}
\newtheorem*{theorem*}{Theorem}
\newtheorem*{corollary*}{Corollary}
\newtheorem{postulate}{Postulate}
\theoremstyle{definition}
\newenvironment{example}
  {\pushQED{\qed}\examplex}
  {\popQED\endexamplex}
\DeclarePairedDelimiterX{\inp}[2]{\langle}{\rangle}{#1, #2}
\newcommand{\Hil}[1]{\mathcal{H}^{#1}}
\NewDocumentCommand\LH{mo}{%
  \IfNoValueTF{#2}
   {\mathcal{L}(\mathcal{H}^{#1})}
   {\mathcal{L}(\mathcal{H}^{#1},\mathcal{H}^{#2})}%
}
\begin{document}

\begin{abstract}
There has been a body of works deriving the complex Hilbert space structure of quantum theory from axioms/principles/postulates to deepen our understanding about quantum theory and to reveal ways to go beyond it to resolve foundational issues. Recent progresses in incorporating indefinite causal structure into physical theories suggest that a more comprehensive understanding of both quantum theory and the theory beyond it accounts for indefinite causal structure. We formulate a framework of physical theories without assuming definite causal structure and identify postulates that single out the complex Hilbert space structure. More than one complex Hilbert space theory is compatible with the postulates, which leaves the room for the further search of the best among these theories.
\end{abstract}

\title{Quantum from principles without assuming definite causal structure}

\begin{CJK*}{UTF8}{gbsn}
\author{Ding Jia (贾丁)}
\email{ding.jia@uwaterloo.ca}
\affiliation{Department of Applied Mathematics, University of Waterloo, Waterloo, ON, N2L 3G1, Canada}
\affiliation{Perimeter Institute for Theoretical Physics, Waterloo, ON, N2L 2Y5, Canada}
\maketitle
\end{CJK*}

\section{Introduction}\label{sec:int}

Ordinary quantum theory assumes definite causal structure. This assumption is manifested in the existence of a dynamical law that evolves physical states through a definitely ordered sequence of continuous or discrete times, and in the definite causal order presumed for the quantum operations.

In recent years it is realized that to describe nature more comprehensively it is very likely necessary to drop the assumption of definite causal structure and incorporate indefinite causal structure into the theory. Experiments claiming the realizations of operations with indefinite causal structure had been reported \cite{maclean2017quantum, procopio2015experimental, rubino2017experimental}, and protocols had been discovered offering a further layer of indefinite-causal-structure-over-definite-causal-structure advantage in information processing (e.g. \cite{hardy2009quantum, chiribella2013quantum, chiribella2012perfect, araujo2014computational, feix2015quantum, guerin2016exponential}), in addition to the quantum-over-classical advantage for theories with definite causal structure \cite{nielsen2000quantum}. Moreover, it was pointed out early on that a theory unifying quantum theory and general relativity is expected to have a causal structure that is both dynamical and indefinite \cite{hardy2005probability, hardy2007towards}.

While the pioneer work introduces indefinite causal structure to general operational probabilistic theories \cite{hardy2005probability, hardy2007towards}, more recent works specialize to construct theories and models with the complex Hilbert space structure (e.g., \cite{chiribella2013quantum, oreshkov2012quantum, oreshkov2016operational}). Ordinary quantum theory\footnote{As a note on terminology, we sometimes refer to complex Hilbert space quantum theory as ``quantum theory'' for simplicity. This is to be distinguished from quantum theory on other spaces, such as ``real Hilbert space quantum theory'' \cite{stueckelberg1960quantum}. By ``ordinary quantum theory'' we mean complex Hilbert space quantum theory with definite causal structure.} based on the complex Hilbert space structure suffers foundational problems which motivate people to look for better alternatives \cite{auletta2001foundations}. In particular, there is a body of work that study alternative operational probabilistic theories (see e.g. \cite{chiribella2016quantum} and references therein). Some alternative theories exhibit interesting new features such as larger violations of Bell's inequality than quantum theory \cite{popescu1994quantum, barrett2007information}, but none of the alternatives have so far been found to definitively describe nature better than quantum theory. To answer the deep question of what makes quantum theory special in the landscape of possible probabilistic theories, several different sets of axioms/principles/postulates have been identified which single out quantum theory (e.g. \cite{hardy2001quantum, barnum_higher-order_2014, chiribella2011informational, daki?_quantum_2011, hardy2011reformulating, masanes_derivation_2011, wilce2012four, wilce_conjugates_2012, selby2018reconstructing}). These works usually contain two parts, with the first part offering a framework to formulate a family of probabilistic theories, and the second part deriving that only quantum theory obeys certain postulates. It is hoped that these axiomatic characterizations of complex Hilbert space quantum theory not only tell us what makes quantum theory special, but also guide the continued search for a superior theory that resolves the foundational problems of quantum theory.

The above axiomatic works commonly assume definite causal structure, either at the level of the general framework so that all theories in the landscape have definite causal structure, or at the level of the postulates so that the quantum theory that is singled out has definite causal structure. In view of the need to incorporate indefinite causal structure mentioned at the beginning, the assumption of definite causal structure appears as an important limitation. For the sake of understanding what makes quantum theory special to describe nature so well, it is preferable not to impose definite causal structure because as mentioned above a more comprehensive description of nature likely incorporates indefinite causal structure. For the sake of searching for a theory superior to quantum theory as well, it is preferable not to impose definite causal structure because the superior theory may be a theory with indefinite causal structure.

The task of the present work is to find a set of postulates that singles out the complex Hilbert structure within a framework of theories that does not impose definite causal structure. 

The framework of theories without imposing definite causal structure we use is built on a powerful perspective on physical theories offered by Hardy \cite{hardy2005probability, hardy2007towards}:
\begin{quote}
A physical theory, whatever else it does, must correlate recorded data.
\end{quote}
This motivates us to take \textit{operations} (through which data are recorded) and \textit{correlations} as the basic concepts of the framework, detailed in \Cref{sec:oc}. To give a mathematical structure to the concepts, an important postulate is made so that operations are associated with ordered vector spaces, and correlations are associated with (multi)linear functionals on these spaces. This framework differs from many of the frameworks used in the previous axiomatic works in that correlation, as a concept distinct from operations, plays a very important role.

The task of identifying postulates and deriving the complex Hilbert space structure is made easy by the previous works of Wilce and Barnum \cite{wilce2012four, wilce_conjugates_2012} (see also \cite{barnum2016post} and references therein for a comprehensive account of the approach and \cite{selby2018reconstructing} for a related work based on category theories). The original postulates and derivations in their work are for theories with definite causal structure. Yet we show that the same general strategy of using the Jordan algebra structure to arrive at the complex Hilbert space works in a framework with indefinite causal structure. The list of postulates and the derivation of the complex Hilbert space structure is presented in \Cref{sec:chs}. Some brief concluding remarks are offered in \Cref{sec:con}

\section{Physical theories as theories of operations and correlations}\label{sec:oc}

No matter what ever else a theory of physics does, it must correlate recorded data \cite{hardy2005probability, hardy2007towards}. Data are recorded through operations. There are other things a theory of physics can do, such as categorizing the constituents of the universe and offering a picture of reality, but at a minimum, it must deal with operations and correlations. In this paper, we focus on probabilistic theories. Some basic structures about probabilistic theories taking operations and correlations as fundamental concepts is presented in this section.

\subsection{Operation}
An operation consists of some action and some observation. For example, the game of ``throwing the paper ball into the basket'' involves an operation that consists of the action of picking up the paper ball and throwing it towards the basket, and the observation of seeing whether the paper ball goes into the basket. 

Note that the action and observation do not have to occur in a definite sequence. There are operations with the observation preceding the action, and others with the action and the observation occurring simultaneously. It is helpful to simplify the situation by introducing the notion of ``general action'' to unify action and observation. A general action may be an action with a trivial observation (e.g., Alice throws the paper ball towards the basket and look into the sky without observing whether the ball falls in), a pure observation (e.g., another person Bob observes if Alice's ball falls in), or a combined action-observation (throw the ball and keep on observing where it flies). 

Data is always gathered through the observation part of the general action. The trivial observation with only one possible outcome is still viewed to gather some data, even though this piece of data offers no nontrivial information to distinguish among more than one possibility. 

An operation always refers to some physical objects. In the example above the relevant physical objects are the paper ball and the basket. In general, the relevant physical objects for an operation can be more complicated. For example, the operation of taking an orange and producing a cup of orange juice has the relevant physical object, the orange, going through different forms of existence (raw orange and orange juice). To be specific and talk about the different forms of existence, we speak of the relevant \textit{physical system} of an operation. The physical system shows up as part of the mathematical description of an operation to specify what state of affairs are relevant for the operation. In the example above, we may take the operation to have two relevant physical systems: the state of the orange when it is raw and the state of the orange when it becomes juice. The physical system of an operation specifies a condition that enables the operation and/or a condition that checks the validity of an operation. Only when a paper ball and a basket is present can one play the game of throwing, and only when the orange is turned into juice (but not, say, a half peeled orange) is the operation valid in that context. We note that in some situations the data recorded also invokes physical systems to store the data. For example, in a paper ball throwing competition the result of whether Alice's ball lands in may be recorded on a piece of paper for further reference. This piece of data of either ``yes'' or ``no'' is classical. In other cases the data recorded may take the form of a quantum state or states on some type of systems.

To summarize, in a physical theory, a minimal description of an operation consists of a general action, a set of possible data gathered from the general action, and the relevant physical systems for the operation. More generally, there are situations where multiple choices for the operation are available. A general operation consists of a set of possible general actions, each with its own possible data set and its own relevant physical systems. We settle on this characterization of operations. 

To symbolize an operation we adopt the following convention. A general action is denoted with capital letters in the form $\mathsf{A}$. A physical system is denoted with lower-case letters in the form $\mathsf{a}$. Sometimes we group systems together into a composite system. If the composite physical system $\mathsf{a}$ consists subsystems $\mathsf{a_1},\mathsf{a_2},\cdots, \mathsf{a_n}$, we write $\mathsf{a}=\mathsf{a_1 a_2\cdots a_n}$ and may use either the left side or the right side to refer to the composite system. The set of possible data is enumerated by letters $i$ in a different font. These symbols $\mathsf{A, a}, i$ can be combined to make explicit different pieces of information. For example, a general action $\mathsf{A}$ with system $\mathsf{a}$ is referred to as $\mathsf{A_a}$, and its $i$-th data may be referred to as $\mathsf{A_a}[i]$. 

In this language, an operation $\mathcal{O}$ is described by an indexed set of objects $\{\mathsf{A_a}[i]\}_{\mathsf{A},\mathsf{a},i}$, where it is understood that the sets of possible values $\mathsf{a}$ and $i$ vary according to the choice of general action $\mathsf{A}$. We write
\begin{align}\label{eq:op}
\mathcal{O}=\{\mathsf{A_a}[i]\}_{\mathsf{A},\mathsf{a},i}.
\end{align}

\begin{example}
A familiar example of operation is the quantum instrument used in quantum theory \cite{davies1970operational}. A quantum instrument is a set of completely positive (CP) maps $\{\mathcal{E}[i]\}_i$ from some input state space $L(\mathcal{H}_{\mathsf{a_1}})$ (the space of bounded linear operators on the complex Hilbert space $\mathcal{H}_{\mathsf{a_1}}$) to some output state space $L(\mathcal{H}_{\mathsf{a_2}})$. The set of maps is required to sum up to a completely positive trace preserving map (channel). The quantum instrument describes a general action whose possible observational outcomes are $i$ and whose physical system has two subsystems. The input subsystem $\mathsf{a_1}$ is the one associated with the space $L(\mathcal{H}_{\mathsf{a_1}})$ and the output system $\mathsf{a_2}$ is the one associated with the space $L(\mathcal{H}_{\mathsf{a_2}})$. We write the composite system of the operation as $\mathsf{a=a_1a_2}$. Then the operation takes the form $\{\mathcal{E}_\mathsf{a}[i]\}_{\mathsf{a},i}$, which is a special case of (\ref{eq:op}) with only one choice for the general action.
\end{example}

\subsection{Correlations and probabilistic theories}

The other basic concept of the framework is the correlation. The correlation among data registered from operations may be established through some other operation that interact with the physical systems of the original operations. In some information theory inspired circuit models of operational probabilistic theories this is the only way to establish correlation. Yet it is also possible that the correlation is established not through other operations conducted by agencies. For example, the global states in quantum field theory establish correlations for operations coupled to the field states, but the global state is not supposed to always be prepared by some other operations. Both kinds of correlations, correlations established through and not through operations, can be described in the present framework.

Correlation is a broad term and in general, a theory mentioning the concept of correlation may not refer to probabilities. Yet in this paper we focus on probabilistic theories. In this context the main function of a probabilistic theory is to calculate probabilities for allowed operations to register certain data. In general, the probabilities to be calculated take the form of conditional probabilities. When a conditional probability is well-defined\footnote{See \cite{hardy2005probability, hardy2007towards} for a discussion on the non-triviality of the requirement that the probabilities are well defined.}, a probabilistic theory is expected to offer a method to calculate it.

In general the conditional probabilities are of the form $p(i,j,\cdots,k|\text{cond})\in\mathbb{R}$, where $i,j,\cdots,k$ is a possible set of data to be registered from a set of general actions, and $\text{cond}$ encode the prerequisite conditions for the probability to make sense. The conditions contain the choice of general action for each operation, and further conditions to make the probabilities well-defined. For example, in a circuit model $\text{cond}$ can include the wiring configurations of the devices. In this probabilistic theory setting a \textit{correlation} specifically refers to a map from a set of data to the set of real numbers, offering information on the conditional probabilities. A central theme of any probabilistic theory is to specify the properties of such maps. A natural structure to be imposed is linearity, which forms the topic of the next subsection.

\subsection{Theory structure regarding probabilities}

Conventionally, \textit{absolute probability} are used for probabilities. The conditional probabilities of the form $p(i,j,\cdots,k|\text{cond})\in \mathbb{R}$ obey
\begin{align}
p(i,j,\cdots,k|\text{cond})\ge &0
\\
\sum_{i,j,\cdots,k}p(i,j,\cdots,k|\text{cond})=&1,
\end{align}
where the sum is over possible data to be recorded from the set general actions. These imply
\begin{align}
1\ge p(i,j,\cdots,k|\text{cond})\ge 0.
\end{align}

There is an alternative option of using \textit{probability weights}. The probability weights $w(i,j,\cdots, k|\text{cond})\in \mathbb{R}$ are only required to obey 
\begin{align}
\infty>w(i,j,\cdots,k|\text{cond})\ge &0.
\end{align}
These probability weights are meaningful in comparison with each other, which saves the need for normalization. For any pair $w(i|\text{cond})$ and $w(j|\text{cond})$ of probability weights (Here for simplicity we used one letter $i$ or $j$ to represent a list of observational outcomes.), if $w(j|\text{cond})\ne 0$, then the prediction is that the data $i$ is $r=w(i|\text{cond})/w(j|\text{cond})$ times as likely to be recorded as $j$. If $w(j|\text{cond})= 0$, a comparison of probability weights in terms of the ratio $r=w(i|\text{cond})/w(j|\text{cond})$ should not be made, and physical meaning is that the data $j$ is predicted never to be recorded.

When $0<\sum_{i,j,\cdots,k} w(i,j,\cdots,k|\text{cond})<\infty$, where the sum is over all possible outcome for the set of general actions, normalization can be conducted and the absolute probabilities can be obtained from the relative probabilitie as
\begin{align}\label{eq:gab}
p(i,j,\cdots,k|\text{cond})=\frac{w(i,j,\cdots,k|\text{cond})}{\sum_{i,j,\cdots,k} w(i,j,\cdots,k|\text{cond})}.
\end{align}
The case of $0=\sum_{i,j,\cdots,k} w(i,j,\cdots,k|\text{cond})$ should not appear in a physically meaningful setup, since among all possible outcomes some outcome should happen. In a physically meaningful setup and for finitely many outcomes, $0<\sum_{i,j,\cdots,k} w(i,j,\cdots,k|\text{cond})<\infty$ always holds, and the absolute probabilities can always be obtained from the probability weights. Whereas the absolute probabilities are unique, the probability weights may be rescaled by the same factor without changing the physical content. This means that two theories using probability weights may give physically equivalent predictions even when the exact values for the probability weights of the same outcomes do not agree. The case of a diverging $\sum_{i,j,\cdots,k} w(i,j,\cdots,k|\text{cond})$ may appear when infinitely many outcomes are allowed by a theory. Then one needs to specify a separate rule to convert probability weights to absolute probabilities, if one still wants to do the conversion. As far as the derivation of the complex Hilbert space structure of this paper goes we do not need to worry about this case, since the number of outcomes will be assumed to be finite.

So far we have been talking about operations as an abstract concept without embedding them in a mathematical model. We will now introduce a basic postulate to endow the operations (along with correlations) with some additional mathematical structure. Under this postulate, observational data will become vector spaces elements, and the map of correlations will become (multi)linear functionals over such vector spaces. 

The motivation comes from the probabilistic mixing of general actions. Let $\mathcal{O}=\{\mathsf{A_a}[i]\}_{\mathsf{A,a},i}$ contain $\mathsf{A_a}$ and $\mathsf{B_a}$ as two choices for the general action associated with the same physical system $\mathsf{a}$. Provided both general actions distinguish finitely many possible outcomes, without loss of generality we can suppose they have the same total number of outcomes (adding void outcomes that are never triggered to the general action with the smaller number of outcomes if needed). Suppose a theory predicts $w(i|\text{cond},\mathsf{A})$ should $\mathsf{A}$ be chosen as the general action to be performed, and $w(i|\text{cond},\mathsf{B})$ should $\mathsf{B}$ be chosen as the general action to be performed. Probabilistically mixing $\mathsf{A}$ and $\mathsf{B}$ means performing  $\mathsf{A}$ with probability weight $w_\mathsf{A}$ and $\mathsf{B}$ with probability weight $w_\mathsf{B}$. Under such a mixing $\{\mathsf{A},\mathsf{B};w_\mathsf{A},w_\mathsf{B}\}$ the predictions for the outcomes is expected to be
\begin{align}\label{eq:pma}
w(i|\text{cond},\{\mathsf{A},\mathsf{B};w_\mathsf{A},w_\mathsf{B}\})=w_\mathsf{A}\bar{w}_\mathsf{B} w(i|\text{cond},\mathsf{A})+w_\mathsf{B}\bar{w}_\mathsf{A} w(i|\text{cond},\mathsf{B}),
\end{align}
where $\bar{w}_\mathsf{A}=\sum_i w(i|\text{cond},\mathsf{A})$, and $\bar{w}_\mathsf{B}=\sum_i w(i|\text{cond},\mathsf{B})$. This formula takes the form of a weighted sum of $w(i|\text{cond},\mathsf{A})$ and $w(i|\text{cond},\mathsf{B})$ by the weights $w_\mathsf{A}\bar{w}_\mathsf{B}, w_\mathsf{B}\bar{w}_\mathsf{A}
\in\mathbb{R}$. $\bar{w}_A$ and $\bar{w}_B$ are present to even out initial inequalities of $\sum_i w(i|\text{cond},\mathsf{A})$ and $\sum_i w(i|\text{cond},\mathsf{B})$ that could be present due to the rescaling degeneracy.

Theories in which equation (\ref{eq:pma}) holds has a certain linear structure for the correlation as a map from the outcomes to the probability weights. It suggests that the recorded data on the same physical system be represented as elements in a vector space, with real numbers such as $w_\mathsf{A}\bar{w}_\mathsf{B}$ and $w_\mathsf{B}\bar{w}_\mathsf{A}$ forming the field for the vector space, and the correlations as multilinear maps from these vector spaces to the probability weights. We realize this suggestion as a postulate.
\begin{postulate}[Linearity]\label{ps:l}
Recorded data for general actions with the same relevant physical system are represented as positive cone elements in an ordered vector space with some trivial data as an order unit. Correlations are represented as positive multilinear functionals on such spaces.
\end{postulate}

Here an \textit{ordered vector space} is a real vector space $V$ endowed with a convex cone $V^+$ such that $V^+$ spans $V$, and that $V^+\cap -V^+=\{0\}$. $V^+$ is called the \textit{positive cone} of $V$. An \textit{order unit} of an ordered vector space is an element $u\in V^+$ so that for any $v\in V$, there is an $a>0$ such that $au-v\in V^+$.

The ordered vector space of Postulate \ref{ps:l} is called an \textit{operational space}, and is denoted in the form $\mathfrak{O}_\mathsf{a}$, where $\mathsf{a}$ is the relevant physical system. The dimension of the space is denoted $d_{\mathsf{a}}$. The positive cone is denoted $\mathfrak{O}^+_\mathsf{a}$. It contains the elements that represent physical data. Each $\mathsf{A_a}[i]$ is represented by an element of $\mathfrak{O}_\mathsf{a}^+$. We refer to these vector space elements using the same symbols $\mathsf{A_a}[i]$ for the observational outcomes when no ambiguity arises. When it is clear from the context we often suppress the labels $[i]$ and refer to the vector space elements in the form $\mathsf{A_a}$ for simplicity.

The correlations as positive multilinear functionals on $\mathfrak{O}_\mathsf{a}$, $\mathfrak{O}_\mathsf{b}, \cdots \mathfrak{O}_\mathsf{c}$ are denoted in the form $\mathsf{D^{ab\cdots c}}$ with the physical systems in the superscript to be distinguished from the recorded data with the system in the subscript:
\begin{align}
\mathsf{D^{ab\cdots c}}:  \mathfrak{O}_\mathsf{a}\times \mathfrak{O}_\mathsf{b}\times \cdots\times \mathfrak{O}_\mathsf{c}&\rightarrow \mathbb{R},\nonumber
\\
(\mathsf{A_a}[i],\mathsf{B_b}[j],\cdots, \mathsf{C_c}[k])&\mapsto w(i,j,\cdots, k|\text{cond}).
\end{align}
The vector space generated by the correlations is called a \textit{correlation space} and is denoted $\mathfrak{C}^{\mathsf{ab\cdots c}}$. The dimension of the correlation space is denoted $c_{\mathsf{ab\cdots c}}$.

\begin{example}
An example of an operational probabilistic theory that incorporates indefinite causal structure and uses probability weights is the ``modified Oreshkov-Cerf theory''.

The original Oreshkov-Cerf theory is an operational quantum theory without predefined time \cite{oreshkov2016operational} (See also \cite{oreshkov2015operational}). A main new feature in comparison to ordinary operational quantum theory is that in accordance with the absence of a predefined time, the systems associated with an operation/general action are not separated into input and output subsystems. 

Using the notations of the original paper, an operation/general action $\{M_i^{AB\cdots}\}_{i\in O}$ consists of a set of possible events/outcomes indexed by the data set element $i\in O$. $A,B,\cdots$ are the physical systems associated with the operation, with corresponding Hilbert spaces $\Hil{A},\Hil{B},\cdots$ whose dimensions are $d^A,d^B,\cdots$. The events are represented by positive semidefinite operators $M_i^{AB\cdots}$ on $\Hil{A}\otimes\Hil{B}\otimes\cdots$. 

Operations come in equivalence classes. Two operations $\{M_i^{AB\cdots}\}_{i\in O}$ and  $\{N_i^{AB\cdots}\}_{i\in O}$ that yield the same joint probabilities for all experimental setups (or circuits) belong to the same equivalence class. Similarly events come in equivalence classes. Two events $M_i^{AB\cdots}$ and  $N_i^{AB\cdots}$ coming from different operations that yield the same joint probabilities with other events in all experimental setups (or circuits) belong to the same equivalence class. 

Events/operations in the same equivalence class have operators that differ by a constant factor. One way to avoid this ambiguity is to represent an equivalence class of events by specifying a pair of operators in the form $(M_i^{AB\cdots},\overline{M}^{AB\cdots})$, where $\overline{M}^{AB\cdots}:=\sum_{i\in O}M_i^{AB\cdots}$, and fixing a normalization convention, such as
\begin{align}\label{eq:nc}
\Tr \overline{M}^{AB\cdots}=d^A d^B\cdots.
\end{align}
The null operation $\{O^{AB\cdots}\}$ with trace zero is treated separately as a singular case.

The normalization requirement (\ref{eq:nc}) is weaker than what is usually imposed in ordinary quantum theory. Ordinary quantum theory is time-asymmetric in the sense that measurement outcomes represented by POVM elements sum up to the identity (or more generally, outcomes represented by quantum instrument elements sum up to a channel), but states in a preparation are only required to have their traces sum up to one. In a theory without predefined time this time-asymmetry should be absent, and in the Oreshkov-Cerf theory the time-asymmetry is eliminated by weakening the requirement on outcomes so that only a sum of trace condition (\ref{eq:nc}) is imposed.

The correlation is encoded in the following formula for joint probabilities:
\begin{align}\label{eq:occ}
p(i,j,\cdots|\{M_i^{\cdots}\}_{i\in O},\{N_j^{\cdots}\}_{j\in Q},\cdots;\text{network})
=\frac{\Tr[(M_i^{\cdots}\otimes N_j^{\cdots}
\otimes \cdots )W_{\text{wires}}]}{\Tr[(\overline{M}^{\cdots}\otimes \overline{N}^{\cdots}
\otimes \cdots )W_{\text{wires}}]}.
\end{align}
This is a special case of (\ref{eq:gab}). The condition in the conditional probability specifies the relevant operations and the way they are connected (``network''). The connection can be specified using a graph. The operations are located at the nodes. Each (sub)system of an operation is connected to a (sub)system of another operation with the same dimension using a ``wire'', which is an edge labelled by the system dimension. A wire tells which system interact with which, and is mathematically described as a pure bipartite entangled state $\ketbra{\Upphi}$ whose precise form depends on the symmetry of the system. The operator $W_{\text{wires}}$ is the tensor product of all these wire operators. This is the Oreshkov-Cerf theory in a nutshell. Details on the motivations and discussions about causality can be found in the original article \cite{oreshkov2016operational}.

The theory as presented so far does not fit into the present framework. The map $(M_i^{\cdots}, N_j^{\cdots}, \cdots)\mapsto p(i,j,\cdots|\{M_i^{\cdots}\}_{i\in O},\{N_j^{\cdots}\}_{j\in Q},\cdots;\text{network})$ according to (\ref{eq:occ}) is not multilinear because of the division by $\Tr[(\overline{M}^{\cdots}\otimes \overline{N}^{\cdots}
\otimes \cdots )W_{\text{wires}}]$. To make the map multilinear and fit into the present framework one could use probability weights with the formula
\begin{align}\label{eq:morpw}
w(i,j,\cdots|\{M_i^{\cdots}\}_{i\in O},\{N_j^{\cdots}\}_{j\in Q},\cdots;\text{network})
=\Tr[(M_i^{\cdots}\otimes N_j^{\cdots}
\otimes \cdots )W_{\text{wires}}].
\end{align}
This map $(M_i^{\cdots}, N_j^{\cdots}, \cdots)\mapsto w(i,j,\cdots|\{M_i^{\cdots}\}_{i\in O},\{N_j^{\cdots}\}_{j\in Q},\cdots;\text{network})$ is then multilinear. 

In comparison to in (\ref{eq:occ}), in (\ref{eq:morpw}) the operators with overline no longer show up. By modifying the theory to use probability weights, we depart from describing operations and events in equivalence classes in the form $(M_i^{AB\cdots},\overline{M}^{AB\cdots})$. There is now a constant multiplicative factor ambiguity in the probability weights, since one is allowed to rescale the operators of the events in the same operation by an arbitrary common positive factor. This ambiguity does not affect the physical predictions, since the probability weights are only meaningful in comparison to each other, specifically through taking ratios.

\end{example}

\subsection{Subsystem structures}

As the last part to specify the basic framework for probabilistic theories with operations and correlations, we discuss the subsystem structure for composite physical systems. We assume two very basic properties for the operational spaces of composite systems. A system $\mathsf{a}$ with $d_{\mathsf{a}}=\dim\mathfrak{O}_{\mathsf{a}}=1$ is called a \textit{trivial system}. The space of a trivial system supports only one linearly independent vector, which describes a trivial data. We assume that for a trivial system $\mathsf{a}$, $\mathfrak{O}_{\mathsf{ab}}\cong \mathfrak{O}_{\mathsf{b}}$ as ordered vector spaces for all $\mathsf{b}$. 

The second basic property we assume is that any operational space $\mathfrak{O}_{\mathsf{ab}}$ with two subsystems contain all the product elements while preserving linear independence, i.e., if $\mathsf{A_a}\in \mathfrak{O}_{\mathsf{a}}$ and  $\mathsf{B_b}\in \mathfrak{O}_{\mathsf{b}}$, then there is an element $\mathsf{A_a}\mathsf{B_b}\in \mathfrak{O}_{\mathsf{ab}}$ so that if $\mathsf{A_a}$ and $\mathsf{A'_a}$ are linearly independent in $\mathfrak{O}_{\mathsf{a}}$ and $\mathsf{B_b}$ and $\mathsf{B'_b}$ are linearly independent in $\mathfrak{O}_{\mathsf{b}}$, then $\mathsf{A_aB_b}$, $\mathsf{A'_aB_b}$, $\mathsf{A_aB'_b}$ and $\mathsf{A'_aB'_b}$ are all linearly independent in $\mathfrak{O}_{\mathsf{ab}}$. This implies that $d_\mathsf{a} d_\mathsf{b}\le d_\mathsf{ab}$. 

There is a similar basic property we assume for the correlations that pertain to two operational spaces. Suppose $\mathsf{C^a}$ is a correlation pertaining to $\mathfrak{O}_{\mathsf{a}}$ itself and $\mathsf{D^b}$ is a correlation pertaining to $\mathfrak{O}_{\mathsf{b}}$. Then we assume that there is a correlation $\mathsf{C^a}\mathsf{D^b}$ pertaining to $\mathfrak{O}_{\mathsf{ab}}$ so that $\mathsf{C^a}\mathsf{D^b}(\mathsf{A_a}\mathsf{B_b})=\mathsf{C^a}(\mathsf{A_a})\mathsf{D^b}(\mathsf{B_b})$, i.e., the probability weights multiply.

\subsection{Comments on the framework}

The framework just presented family-resemble other frameworks used in previous axiomatic works, but have some notable differences. First of all no assumption of definite causal structure is imposed on the current framework. Moreover, correlations carrying non-trivial physical information but not generated by operations is allowed in the current framework. This is in contrast with the circuit models \cite{chiribella2010probabilistic, chiribella2011informational, hardy2011reformulating}, where the operations carry non-trivial physical correlation and the ``wires'' do not. Some theories are more naturally described in the current framework. For example, as mentioned, the global state of quantum field theory is not prepared by an operation and is more suitably viewed as encoding the correlation of operations. Another example is the process matrices that allow correlations with indefinite causal structure \cite{oreshkov2012quantum, araujo2015witnessing, oreshkov2016causal}. It is found that the process matrices cannot be parallel-composed without constraints \cite{jia2018tensor}. This would appear unnatural if the process matrices are viewed as operations, but natural if they are viewed as correlations among operations.

Another difference lies in the graphical representation of using hypergraphs instead of graphs. Graphical reasoning had been important in previous axiomatic works and works on operational theories in general (see, e.g., \cite{chiribella2010probabilistic, chiribella2011informational, hardy2011reformulating, hardy2013theory}, and \cite{coecke2017picturing} and reference therein). If one chooses to work with the current framework, the natural pictorial tool is the hypergraph, rather than the graph, which is widely used in other models (e.g., \cite{chiribella2009theoretical, chiribella2011informational, hardy2011reformulating, chiribella2010probabilistic, oreshkov2016operational}). Roughly speaking a hypergraph is a generalized graph that allows edges to connect to other integer numbers of nodes rather than just two. The generalized edge is called a ``hyperedge''. We can associate the nodes of a hypergraph to operations/outcomes and the hyperedges to the correlations, connecting the nodes they correlate. The implications of using hypergraphs instead of graphs for probabilistic theories remains to be explored.

\section{The complex Hilbert space structure}\label{sec:chs}

In this section we write down a list of postulates and show that they single out the complex Hilbert space structure. 

We restrict attention to operations with finite dimensional operational spaces. Technically, the reason is that the derivation of the complex Hilbert space structure below uses dimension counting arguments and lemmas that work for finite dimensional spaces. Conceptually, the restriction to work with finite dimensions can be motivated by the constraints of realistic data gathering. Even for theories whose mathematical description uses infinite dimensional spaces such as quantum field theory, realistic data gathering subject to the constraints of finite resolution and finite range, which imply a finite data set. 
Despite these motivations for working with finite dimensional spaces, we do hope that some future work finds a derivation of the complex Hilbert space structure without restricting to finite dimensional spaces. There are useful theories described with infinite dimensional spaces (such as quantum field theory) which introduce new features absent in theories with finite dimensional spaces. It is an open question to what extent the following derivation generalizes to infinite dimensions.

\subsection{Postulates}

To state the next postulate, we need to define the notion of transformation. In ordinary quantum theory, a transformation is a trace non-increasing\footnote{By allowing not just trace-perserving maps we keep the notion of transformation more general. This general notion of transformation is used, for example, in \cite{chiribella2010probabilistic}.} and completely positive map. The trace non-increasing property is required so that absolute probabilities remain in the interval $[0,1]$. The completely positive property is required to ensure that physical states get mapped to physical states even if the transformation acts partially on a subsystem. We want a generalized definition of transformations that applies to all the theories within the current framework. Since the framework uses probability weights instead of absolute probabilities, there is no requirement of the kind of the trace non-increasing property. The following can be viewed as a generalization of complete positive maps.

Fix two arbitrary operational spaces $\mathfrak{O}_{\mathsf{a}}$ and $\mathfrak{O}_{\mathsf{b}}$. We want to define the notion of $\mathsf{a}$-to-$\mathsf{b}$ transformation, which not only maps from $\mathfrak{O}_{\mathsf{a}}$ to $\mathfrak{O}_{\mathsf{b}}$, but also from $\mathfrak{O}_{\mathsf{ac}}$ to $\mathfrak{O}_{\mathsf{bc}}$ for arbitrary $\mathsf{c}$. An \textit{$\mathsf{a}$-to-$\mathsf{b}$ transformation}, denoted by $\mathsf{T_{a,b}}$, is a family $\{T_\mathsf{ac,bc}\}_\mathsf{c}$ of linear maps for each $\mathsf{c}$
\begin{align}
T_\mathsf{ac,bc}: \mathfrak{O}_{\mathsf{ac}}\rightarrow \mathfrak{O}_{\mathsf{bc}},
\end{align}
so that: i) For arbitrary $\mathsf{A}_{\mathsf{a}}\in \mathfrak{O}_{\mathsf{a}}$ and $\mathsf{B}_{\mathsf{c}}\in \mathfrak{O}_{\mathsf{c}}$, $T_\mathsf{ac,bc}(\mathsf{A}_{\mathsf{a}}\mathsf{B}_{\mathsf{c}})= T_\mathsf{a,b}(\mathsf{A}_{\mathsf{a}})\mathsf{B}_{\mathsf{c}}$ for $T_\mathsf{a,b}: \mathfrak{O}_{\mathsf{a}}\rightarrow \mathfrak{O}_{\mathsf{b}}$, and ii) $T_\mathsf{ac,bc}(\mathfrak{O}_{\mathsf{ac}}^+)\subset \mathfrak{O}_{\mathsf{bc}}^+$. Condition i) ensure that the transformation acts locally on product elements and condition ii) generalizes complete positivity.

The transformations as linear maps can be summed linearly. Given $\mathsf{T_{a,b}}=\{T_\mathsf{ac,bc}\}_\mathsf{c}$ and $\mathsf{S_{a,b}}=\{S_\mathsf{ac,bc}\}_\mathsf{c}$, define $p\mathsf{T_{a,b}}+q\mathsf{S_{a,b}}=\{pT_\mathsf{ac,bc}+qS_\mathsf{ac,bc}\}_\mathsf{c}$ for $p,q\in\mathbb{R}$. In this way a vector space is generated. As can be checked easily, the set of all transformations $\mathsf{T_{a,b}}$ forms a convex cone, making the vector space an ordered vector space. Call it a \textit{transformation space} and denote it by $\mathfrak{T}_{\mathsf{a,b}}$. Denote the positive cone by $\mathfrak{T}_{\mathsf{a,b}}^+$ and $\dim\mathfrak{T}_{\mathsf{a,b}}$ by $t_{\mathsf{a,b}}$.

The above definition of transformations is intended as a mathematical characterization of the in principle possible physical transformations. Whether all these mathematically defined transformations are actually realizable, and what the physical interpretation is for the transformations are subject to further specifications of particular theories.\footnote{A commonly seen understanding of a transformation is that it takes states from a previous time to a latter time. This understanding is not general enough. For example, a quantum comb type transformation can transform an operation (which may be a transformation rather than a state) to another operation that extends from an earlier time to a latter time \cite{chiribella2009theoretical}.}

We can now state the postulate.
\begin{postulate}[Dimension]\label{ps:to}
An operational space whose physical system has two subsystems has the same dimension as the correlation space over these two systems, and as the transformation spaces between these two systems.
\end{postulate}
Equivalently, Postulate \ref{ps:to} says that for arbitrary $\mathfrak{O}_{\mathsf{a}}$ and $\mathfrak{O}_{\mathsf{b}}$, $d_{\mathsf{ab}}=c_{\mathsf{ab}}=t_{\mathsf{a,b}}=t_{\mathsf{b,a}}$ (recall that $d_{\mathsf{ab}}=\mathfrak{O}_{\mathsf{ab}}$). One can interpret the postulate as offering the operations enough degrees of freedom to potentially realize all two system correlations and mathematically possible transformations. The correlations of two operations include both those arising from naturally and those controlled by agents. The latter type of correlation must interact with the two relevant systems, and is controlled by the agents through some operations containing the two systems as subsystems. The postulate says that as far as the degrees of freedom of the vector spaces are concerned, the operations have as many degrees of freedom as the set of all possible correlations, including the type arising from nature. Similarly, there are transformations arising from nature and transformations controlled by agents. The agent-controlled transformations between two systems are realized by the agents through some operations pertaining to the two systems as subsystems. The postulate says that as far as the degrees of freedom of the vector spaces are concerned, the operations have as many degrees of freedom as the set of all possible transformations, including the type arising from nature.

We move on from discussing operational space elements transform into each other to how they correlate with each other. Without further constraints the framework allows weird theories such as one in which data recorded from any two operations on different systems are not correlated. In a universe described by this theory little inference can be made. To focus attention on more reasonable theories a postulate is needed to offer some regularity in terms of how systems correlate with each other. We adopt the following ``pairing'' postulate for this purpose.

To state the postulate, first we need the notion of a ``copy'' of operational spaces. An \textit{order-isomorphism} \(f\) between ordered vector spaces \(V\) and \(W\) is a positive, invertible linear map having a positive inverse, where positive means
\(f(V^+) \subseteq W^+\). If two operational spaces $\mathfrak{O}_{\mathsf{a}}$ and $\mathfrak{O}_{\mathsf{b}}$ share an order-isomorphism, we say that they are \textit{copies} of each other. We use primes on physical systems and vectors to signify copies (e.g., $\mathfrak{O}_{\mathsf{a'}}$ for the copy of $\mathfrak{O}_{\mathsf{a}}$, and $\mathsf{A'}_{\mathsf{a'}}$ for the ``copy'' of $\mathsf{A}_{\mathsf{a}}$ under the order-isomorphism).

An operational space $\mathfrak{O}_{\mathsf{a}}$ is said to have a \textit{pairing} if there is a copy $\mathfrak{O}_{\mathsf{a}'}$ and a correlation $\mathsf{C}^{\mathsf{aa'}}$ on the two spaces so that $\mathsf{C}^{\mathsf{aa'}}(\mathsf{A_a},\mathsf{A'_{a'}})>0$ for all nonzero $\mathsf{A_a}\in \mathfrak{O}_{\mathsf{a}}$. The pairing is said to be \textit{symmetric} if $\mathsf{C}^{\mathsf{aa'}}(\mathsf{A_a},\mathsf{B'_{a'}})=\mathsf{C}^{\mathsf{aa'}}(\mathsf{B_a},\mathsf{A'_{a'}})$ for all $\mathsf{A_a, B_a}\in \mathfrak{O}_{\mathsf{a}}$. The pairing is said to be \textit{distinguishing} if whenever an operational space element yields only physical (non-negative) probability weights through the correlation, the element is physical, i.e., whenever $\mathsf{A_a}$ is such that $\mathsf{C}^{\mathsf{aa'}}(\mathsf{A_a},\mathsf{B'_{a'}})\ge 0$ for all $\mathsf{B'_{a'}}\in \mathfrak{O}_{\mathsf{a}'}^+$, $\mathsf{A_{a}}\in \mathfrak{O}_{\mathsf{a}}^+$. A \textit{factorizably symmetric distinguishing pairing} is such that it factorizes for operational spaces with factors while preserving the symmetric and distinguishing properties, i.e., for $\mathfrak{O}_\mathsf{a}=\mathfrak{O}_\mathsf{a_1a_2}$, $\mathsf{A_a}=\mathsf{A_{a_1}}\mathsf{A_{a_2}}$, and $\mathsf{B_a}=\mathsf{B_{a_1}}\mathsf{B_{a_2}}$, $\mathsf{C}^{\mathsf{aa'}}(\mathsf{A_a},\mathsf{B'_{a'}})=\mathsf{C}_1^{\mathsf{a_1a_1'}}(\mathsf{A_{a_1}},\mathsf{B'_{a'_1}})\mathsf{C}_2^{\mathsf{a_2a_2'}}(\mathsf{A_{a_2}},\mathsf{B'_{a'_2}})$ factorizes into two pairings $\mathsf{C}_1^{\mathsf{a_1a_1'}}$ and $\mathsf{C}_2^{\mathsf{a_2a_2'}}$ such that both are symmetric and distinguishing.

\begin{postulate}[Pairing]\label{ps:p}
Each operational space has at least one factorizably symmetric distinguishing pairing.
\end{postulate}

One can interpret the postulate as imposing some regularity on how recorded data correlate. The existence of a pairing offers the possibility to establish some positive correlations for pairs of data recorded with operations, in particular for operations conducted on isomorphic operational spaces, the most elementary pair of spaces that positive correlations can be expected on. The strongest form of correlation we can hope for is that from the outcomes of one operation we can infer unambiguously the outcomes of the paired operation. Postulate \ref{ps:p} is a weaker requirement only asking that paired outcomes appear together with some positive chance (Note that the physical outcomes are elements of the positive cone, so strictly speaking the pairing condition is an extension of the above requirement to all elements of the operational spaces.). The symmetric property appears as a natural assumption for operational spaces that are isomorphic. The distinguishing property assumes that the correlation of the pairing is strong enough to reflect (at the mathematical level) any unphysical correlation if there is any. Finally, the factorizing property is a natural assumption considering the factor structure.

The next postulate is easy to state. An ordered vector space \(V\) is \textit{homogeneous} if \(\text{Aut}(V)\), the group of order-automorphisms on \(V\), acts transitively on the interior of \(V_+\).

\begin{postulate}[Homogeneity]\label{ps:hm}
Operational spaces are homogeneous.
\end{postulate}

Intuitively, the postulate says that inside an operational space any region looks locally like any other. For example, the qubit space of ordinary quantum theory is homogeneous, as there is no preferred direction or region inside the space.

The previous postulates already offer strong constraints to arrive at self-dual (Theorem \ref{th:sd}) and homogeneous spaces, so that only the self-adjoint parts of real,
complex, quaternionic, 3-by-3 octonions matrix algebras, spin factors, and their direct sums are allowed \cite{koecher_geodattischen_1958, vinberg1961homogeneous, jordan_algebraic_1934}. At this stage, a most general theory fulfilling the postulates appears to be direct sum of the different types of the systems listed above. However, in fact as long as a single quantum qubit shows up in the combination, the theory must be exclusively complex Hilbert space quantum (see the Barnum-Wilce Theorem below). The only possibility against this is that a qubit does not show up. Therefore to arrive at the complex Hilbert space we assume:
\begin{postulate}[Qubit]\label{ps:qubit}
There exists a qubit.
\end{postulate}

\subsection{Derivation}

The derivation of the complex Hilbert space structure is simplified immensely thanks to the previous works of Barnum and Wilce  \cite{barnum_local_2014}, Koecher \cite{koecher_geodattischen_1958}, Vinberg \cite{vinberg1961homogeneous}, and Jordan, von Neumann and Wigner \cite{jordan_algebraic_1934}. The relevance of these results is condensed in the Barnum-Wilce Theorem below, which directly yields the final result we look for. To connect the above postulates to the Barnum-Wilce Theorem, we only need to do two simple proofs (Theorem \ref{th:sd} and Theorem \ref{th:tl}).

A finite-dimensional ordered vector space \(V\) is \textit{self-dual} if it has an inner product such that \(a\) belongs to the positive cone \(V^+\) iff \(\langle a,b\rangle\geq 0\) for all \(b\in V_+\).
\begin{theorem}\label{th:sd}
All $\mathfrak{O}_\mathsf{a}$ are self-dual.
\end{theorem}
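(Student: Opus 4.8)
The plan is to use Postulate \ref{ps:p} to manufacture the required inner product directly out of a pairing, so that self-duality becomes little more than a restatement of the pairing's defining properties. Fix an operational space $\mathfrak{O}_\mathsf{a}$ and let $\mathsf{C}^{\mathsf{aa'}}$ be a factorizably symmetric distinguishing pairing, with associated copy $\mathfrak{O}_{\mathsf{a}'}$ and order-isomorphism $f:\mathfrak{O}_\mathsf{a}\to\mathfrak{O}_{\mathsf{a}'}$, $f(\mathsf{A_a})=\mathsf{A'_{a'}}$. I would define a bilinear form on $\mathfrak{O}_\mathsf{a}$ by $\inp{\mathsf{A_a}}{\mathsf{B_a}}:=\mathsf{C}^{\mathsf{aa'}}(\mathsf{A_a},f(\mathsf{B_a}))$, that is, pull the pairing correlation back along $f$ in its second slot. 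Bilinearity is immediate from the multilinearity of $\mathsf{C}^{\mathsf{aa'}}$ and the linearity of $f$.

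Next I would verify that $\inp{\cdot}{\cdot}$ is genuinely an inner product. Symmetry follows because the symmetric property reads $\mathsf{C}^{\mathsf{aa'}}(\mathsf{A_a},\mathsf{B'_{a'}})=\mathsf{C}^{\mathsf{aa'}}(\mathsf{B_a},\mathsf{A'_{a'}})$, which is precisely $\inp{\mathsf{A_a}}{\mathsf{B_a}}=\inp{\mathsf{B_a}}{\mathsf{A_a}}$ once $\mathsf{B'_{a'}}=f(\mathsf{B_a})$ and $\mathsf{A'_{a'}}=f(\mathsf{A_a})$ are substituted. Positive-definiteness follows from the defining positivity of the pairing: $\inp{\mathsf{A_a}}{\mathsf{A_a}}=\mathsf{C}^{\mathsf{aa'}}(\mathsf{A_a},\mathsf{A'_{a'}})>0$ for every nonzero $\mathsf{A_a}$, while it vanishes at the origin. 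So $\inp{\cdot}{\cdot}$ is a bona fide inner product.

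It then remains to establish the self-duality equivalence against the cone $\mathfrak{O}_\mathsf{a}^+$. For the forward direction, if $\mathsf{A_a}\in\mathfrak{O}_\mathsf{a}^+$ then for any $\mathsf{B_a}\in\mathfrak{O}_\mathsf{a}^+$ we have $f(\mathsf{B_a})\in\mathfrak{O}_{\mathsf{a}'}^+$ since $f$ is positive, and because $\mathsf{C}^{\mathsf{aa'}}$ is a positive functional by Postulate \ref{ps:l}, $\inp{\mathsf{A_a}}{\mathsf{B_a}}\ge 0$. For the converse I would use that an order-isomorphism maps the cone exactly onto the cone, $f(\mathfrak{O}_\mathsf{a}^+)=\mathfrak{O}_{\mathsf{a}'}^+$, since both $f$ and $f^{-1}$ are positive. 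Hence if $\inp{\mathsf{A_a}}{\mathsf{B_a}}\ge 0$ for all $\mathsf{B_a}\in\mathfrak{O}_\mathsf{a}^+$, then $\mathsf{C}^{\mathsf{aa'}}(\mathsf{A_a},\mathsf{B'_{a'}})\ge 0$ for all $\mathsf{B'_{a'}}\in\mathfrak{O}_{\mathsf{a}'}^+$, and the distinguishing property yields $\mathsf{A_a}\in\mathfrak{O}_\mathsf{a}^+$.

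The step I expect to carry the real weight is this converse. Self-duality is the assertion that the cone equals its own dual cone under $\inp{\cdot}{\cdot}$, and the only hypothesis strong enough to force a vector with uniformly non-negative pairings to be physical is the distinguishing clause of Postulate \ref{ps:p}; everything else is formal transport along $f$. I note that the factorizing clause of the pairing is not used here, and that homogeneity (Postulate \ref{ps:hm}) plays no role yet. Both presumably enter downstream, factorizing to propagate the inner-product structure to composite systems and homogeneity to combine with self-duality in the Koecher--Vinberg step.
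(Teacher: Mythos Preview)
Your proof is correct and follows essentially the same route as the paper's: define the inner product by pulling the pairing back along the copy map, read off symmetry and positive-definiteness from the pairing's properties, and obtain the two self-duality inclusions from positivity of the correlation and the distinguishing clause respectively. If anything you are slightly more explicit than the paper in naming the order-isomorphism $f$ and invoking $f(\mathfrak{O}_\mathsf{a}^+)=\mathfrak{O}_{\mathsf{a}'}^+$ for the converse direction; your closing remarks about the factorizing clause and homogeneity being deferred are also accurate.
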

\begin{proof}
According to Postulate \ref{ps:p}, there is a symmetric distinguishing pairing  $(\mathfrak{O}_{\mathsf{a}'},\mathsf{C^{aa'}})$ for $\mathfrak{O}_\mathsf{a}$. We claim that $\langle \cdot ,\cdot \rangle:\mathfrak{O}_\mathsf{a}\times \mathfrak{O}_\mathsf{a}\rightarrow \mathbb{R}$ defined by $\langle \mathsf{A_a,B_a} \rangle=\mathsf{C^{aa'}}(\mathsf{A_a},\mathsf{B'_{a'}})$ is an inner-product, i.e., it is bilinear, symmetric, and positive definite. The first property follows from Postulate \ref{ps:l}, and the rest from $(\mathfrak{O}_{\mathsf{a}'},\mathsf{C^{aa'}})$ being a symmetric pairing.

Now we show that $\mathsf{A_a}\in\mathfrak{O}^+_{\mathsf{a}}$ iff $\langle \mathsf{A_a,B_a} \rangle\ge 0$ for all $\mathsf{B_a}\in\mathfrak{O}^+_{\mathsf{a}}$. If $\mathsf{A_a}\in\mathfrak{O}^+_{\mathsf{a}}$, then $\langle \mathsf{A_a,B_a} \rangle=\mathsf{C^{aa'}}(\mathsf{A_a},\mathsf{B'_{a'}})\ge 0$ for all $\mathsf{B_a}\in\mathfrak{O}^+_{\mathsf{a}}$ because $\mathsf{C^{aa'}}$ is positive according to Postulate \ref{ps:l}. If $\langle \mathsf{A_a,B_a} \rangle \geq 0$ for all $\mathsf{B_a}\in\mathfrak{O}^+_{\mathsf{a}}$, $\mathsf{A_a}\in\mathfrak{O}^+_{\mathsf{a}}$ by the distinguishing property of the pairing.
\end{proof}

\begin{theorem}[Tomographic locality]\label{th:tl}
$d_{\mathsf{ab}}=d_{\mathsf{a}}d_{\mathsf{b}}=c_{\mathsf{ab}}=c_{\mathsf{a}}c_{\mathsf{b}}$.
\end{theorem}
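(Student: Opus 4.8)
The plan is to avoid attacking the upper bound $d_{\mathsf{ab}}\le d_\mathsf{a}d_\mathsf{b}$ on the operational side, where it is circular, and instead to work on the correlation side, where a matching upper bound is automatic, and then transfer the result back through Postulate \ref{ps:to}. The easy half, $d_\mathsf{a}d_\mathsf{b}\le d_\mathsf{ab}$, is already supplied by the subsystem structure (product elements preserve linear independence). The difficulty is the reverse inequality: trying to show directly that the products $\mathsf{A_a}\mathsf{B_b}$ span $\mathfrak{O}_\mathsf{ab}$ leads nowhere, because the natural tools—extending a transformation or a functional off the span of products—presuppose exactly the spanning one wants. The key observation is that a correlation over the two systems $\mathsf{a}$ and $\mathsf{b}$ is, by the definition in \Cref{sec:oc}, a \emph{bilinear} functional on $\mathfrak{O}_\mathsf{a}\times\mathfrak{O}_\mathsf{b}$, so $\mathfrak{C}^{\mathsf{ab}}$ embeds into the space of all such bilinear functionals, which has dimension $d_\mathsf{a}d_\mathsf{b}$. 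Hence $c_\mathsf{ab}\le d_\mathsf{a}d_\mathsf{b}$ for free, and Postulate \ref{ps:to} will then hand back $d_\mathsf{ab}=c_\mathsf{ab}\le d_\mathsf{a}d_\mathsf{b}$.

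First I would establish the single-system identity $c_\mathsf{a}=d_\mathsf{a}$. Since any single-system correlation is a linear functional on $\mathfrak{O}_\mathsf{a}$, we have $c_\mathsf{a}\le d_\mathsf{a}$ immediately. For the reverse, I would use the symmetric distinguishing pairing $(\mathfrak{O}_{\mathsf{a}'},\mathsf{C^{aa'}})$ of Postulate \ref{ps:p}: fixing a physical element in the copy slot turns $\mathsf{C^{aa'}}$ into a single-system correlation $\mathsf{C^{aa'}}(\cdot,\mathsf{B'_{a'}})$ on $\mathfrak{O}_\mathsf{a}$, and by the construction of Theorem \ref{th:sd} the associated form $\langle\mathsf{A_a},\mathsf{B_a}\rangle=\mathsf{C^{aa'}}(\mathsf{A_a},\mathsf{B'_{a'}})$ is a genuine, nondegenerate inner product. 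Nondegeneracy means the map $\mathsf{B_a}\mapsto\langle\cdot,\mathsf{B_a}\rangle$ from $\mathfrak{O}_\mathsf{a}$ into its dual is injective, hence surjective in finite dimension, so these correlations already exhaust $(\mathfrak{O}_\mathsf{a})^*$. Thus $c_\mathsf{a}=d_\mathsf{a}$, and likewise $c_\mathsf{b}=d_\mathsf{b}$.

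Next I would assemble product correlations. Choosing bases $\{\mathsf{C}^{\mathsf{a}}_i\}_{i=1}^{d_\mathsf{a}}$ and $\{\mathsf{D}^{\mathsf{b}}_j\}_{j=1}^{d_\mathsf{b}}$ of the two single-system correlation spaces, the subsystem correlation structure provides the products $\mathsf{C}^{\mathsf{a}}_i\mathsf{D}^{\mathsf{b}}_j$, acting on product elements by $\mathsf{C}^{\mathsf{a}}_i\mathsf{D}^{\mathsf{b}}_j(\mathsf{A_a}\mathsf{B_b})=\mathsf{C}^{\mathsf{a}}_i(\mathsf{A_a})\mathsf{D}^{\mathsf{b}}_j(\mathsf{B_b})$. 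These are linearly independent as bilinear functionals: a vanishing combination evaluated on an arbitrary pair $(\mathsf{A_a},\mathsf{B_b})$ forces all coefficients to zero, because the two bases separate the points of $\mathfrak{O}_\mathsf{a}$ and $\mathfrak{O}_\mathsf{b}$ respectively. This yields $d_\mathsf{a}d_\mathsf{b}=c_\mathsf{a}c_\mathsf{b}$ independent elements inside the $d_\mathsf{a}d_\mathsf{b}$-dimensional space of all bilinear functionals, so they form a basis and $c_\mathsf{ab}=c_\mathsf{a}c_\mathsf{b}=d_\mathsf{a}d_\mathsf{b}$. Invoking $d_\mathsf{ab}=c_\mathsf{ab}$ from Postulate \ref{ps:to} then closes the chain $d_\mathsf{ab}=c_\mathsf{ab}=c_\mathsf{a}c_\mathsf{b}=d_\mathsf{a}d_\mathsf{b}$, which is the claim.

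The main obstacle is conceptual rather than computational: recognizing that the upper bound must be obtained on the correlation side and only then imported to the operational side through the Dimension postulate, since a direct operational-side argument (product states spanning $\mathfrak{O}_\mathsf{ab}$) is circular. The one technical point that needs care is the assertion that fixing an argument of the pairing yields a bona fide single-system correlation, so that $c_\mathsf{a}=d_\mathsf{a}$ rather than merely $c_\mathsf{a}\le d_\mathsf{a}$; this is exactly where the positivity and distinguishing content of Postulate \ref{ps:p}, together with the self-duality of Theorem \ref{th:sd}, is actually used.
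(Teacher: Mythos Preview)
Your proposal is correct but follows a genuinely different route from the paper's proof. The paper never touches the correlation side for the core inequality; instead it invokes the \emph{transformation} part of Postulate~\ref{ps:to}. It counts the real parameters needed to specify a transformation $\mathsf{T_{b,a}}$: acting on $\mathfrak{O}_\mathsf{a}$ alone costs $d_\mathsf{a}d_\mathsf{b}$ parameters, while extending to $\mathfrak{O}_\mathsf{ab}$ (with $\mathsf{c}=\mathsf{b}$) costs at least $d_\mathsf{a}d_\mathsf{b}+r_\mathsf{ab}\,d_\mathsf{bb}$, where $r_\mathsf{ab}=d_\mathsf{ab}-d_\mathsf{a}d_\mathsf{b}$. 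Since Postulate~\ref{ps:to} forces $t_\mathsf{b,a}=d_\mathsf{ab}$, one gets $r_\mathsf{ab}(d_\mathsf{bb}-1)\le 0$, hence $r_\mathsf{ab}=0$. The equality $c_\mathsf{a}=d_\mathsf{a}$ is then read off by taking $\mathsf{b}$ trivial in $c_\mathsf{ab}=d_\mathsf{ab}$.

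Your argument is more direct: because a two-system correlation is by definition a bilinear functional on $\mathfrak{O}_\mathsf{a}\times\mathfrak{O}_\mathsf{b}$, the bound $c_\mathsf{ab}\le d_\mathsf{a}d_\mathsf{b}$ is free, and together with $d_\mathsf{a}d_\mathsf{b}\le d_\mathsf{ab}=c_\mathsf{ab}$ (subsystem structure plus Postulate~\ref{ps:to}) the chain closes immediately. This bypasses the transformation machinery entirely and is conceptually cleaner. One minor inefficiency: you invoke Postulate~\ref{ps:p} and Theorem~\ref{th:sd} to establish $c_\mathsf{a}=d_\mathsf{a}$, whereas this follows already from Postulate~\ref{ps:to} with $\mathsf{b}$ trivial (as the paper does), so your proof pulls in an extra postulate that is not strictly needed for this theorem. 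The paper's route, in turn, shows that the transformation clause of Postulate~\ref{ps:to} already does real work independently of the correlation clause.
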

\begin{proof}
Let \(\mathfrak{O}_\mathsf{a}\) and \(\mathfrak{O}_\mathsf{b}\) be arbitrary. We want to count the number of degrees of freedom, $t_{\mathsf{a,b}}$, in defining a transformation \(\mathsf{T_{b,a}}\in\mathfrak{T}_{\mathsf{b,a}}\). These degrees of freedom fix its action on all \(\mathsf{A}_{\mathsf{ac}}\in \mathfrak{O}_{\mathsf{ac}}\) for arbitrary \(\mathsf{c}\). First let \(\mathsf{c}\) be trivial. The local action of \(\mathsf{T_{ab}}\) on $\mathfrak{O}_\mathsf{a}$ is determined by taking \(d_{\mathsf{a}}\) linearly independent vectors from $\mathfrak{O}_\mathsf{a}$ and specifying an image for each. Each image requires \(d_{\mathsf{b}}\) real parameters to specify, so \(d_{\mathsf{a}}d_{\mathsf{b}}\) independent real parameters are needed in total.

Now let $\mathsf{c}=\mathsf{b}$. Condition ii) in the definition of transformations fixes the action of \(\mathsf{T_{ab}}\) on product elements of the form $\mathsf{A_a B_b}$, but the action on the possible additional linearly independent elements is yet unspecified. For each of the \(r_{\mathsf{ab}}:=d_{\mathsf{ab}}-d_{\mathsf{a}}d_{\mathsf{b}}\geq 0\) additional linearly independent vectors, \(d_{\mathsf{bb}}\) real parameters are needed to determine the image. Hence specifying \(\mathsf{T_{ab}}\) requires at least \(l_{\mathsf{ab}}:=d_{\mathsf{a}}d_{\mathsf{b}}+r_{\mathsf{ab}}d_{\mathsf{bb}}\) independent real parameters, i.e., \(t_{\mathsf{b,a}} \geq l_{\mathsf{ab}}\). By Postulate \ref{ps:to}, $t_{\mathsf{b,a}}=d_{\mathsf{ab}}$, so
\begin{align}\label{eq:di}
l_{\mathsf{ab}}-t_{\mathsf{b,a}}=d_{\mathsf{a}}d_{\mathsf{b}}+r_{\mathsf{ab}}d_{\mathsf{bb}}-d_{\mathsf{ab}}=r_{\mathsf{ab}}(d_{\mathsf{bb}}-1)\leq 0.
\end{align}
If \(d_{\mathsf{b}}>1\), \(d_{\mathsf{bb}}\geq d_{\mathsf{b}}^2>1\). By (\ref{eq:di}), \(r_{\mathsf{ab}}=0\). If otherwise \(d_{\mathsf{b}}=1\), then \(r_{\mathsf{ab}}=d_{\mathsf{ab}}-d_{\mathsf{a}}d_{\mathsf{b}}=d_{\mathsf{a}}-d_{\mathsf{a}}=0\). Therefore \(r_{\mathsf{ab}}\) is always $0$, and \(d_{\mathsf{ab}}=d_{\mathsf{a}}d_{\mathsf{b}}\). By Postulate \ref{ps:to}, this also equals to $c_{\mathsf{ab}}$. Taking $\mathsf{b}$ to be trivial, we see that $c_{\mathsf{a}}=d_{\mathsf{a}}$. Therefore $d_{\mathsf{ab}}=d_{\mathsf{a}}d_{\mathsf{b}}=c_{\mathsf{ab}}=c_{\mathsf{a}}c_{\mathsf{b}}$.
\end{proof}

In Proposition 1.1 of \cite{barnum_local_2014}, Barnum and Wilce proved the following result.
\begin{theorem*}[Barnum-Wilce]
	For a homogeneous and factorizably self-dual probabilistic theory, if it obeys tomographic locality and contains a qubit, then all its systems are self-adjoint parts of complex matrix algebras.
\end{theorem*}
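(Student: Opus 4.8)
The plan is to route the argument through the classical theory of symmetric cones and the Jordan classification, bringing in tomographic locality and the qubit only at the end to collapse the list of possibilities to the complex case. First I would invoke the Koecher-Vinberg theorem \cite{koecher_geodattischen_1958, vinberg1961homogeneous}: a finite-dimensional ordered vector space whose positive cone is simultaneously homogeneous and self-dual has a \emph{symmetric} cone, and symmetric cones are exactly the cones of squares of Euclidean (formally real) Jordan algebras. Since each operational space $\mathfrak{O}_\mathsf{a}$ is self-dual by Theorem \ref{th:sd} and homogeneous by hypothesis, this equips every system with the structure of a Euclidean Jordan algebra whose order unit plays the role of the Jordan identity.

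Next I would apply the Jordan-von Neumann-Wigner classification \cite{jordan_algebraic_1934}: every Euclidean Jordan algebra is a direct sum of simple ones, and the simple factors fall into exactly five families, namely the self-adjoint parts of the matrix algebras $H_n(\mathbb{R})$, $H_n(\mathbb{C})$, and $H_n(\mathbb{H})$, the spin factors, and the exceptional $27$-dimensional algebra $H_3(\mathbb{O})$. At this point every individual system is pinned down up to this list, so the entire remaining task is to show that in the presence of a qubit the other four families are incompatible with tomographic locality.

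The elimination step is where tomographic locality does the real work. Because the theory is factorizably self-dual and homogeneous, the composite $\mathfrak{O}_\mathsf{ab}$ is again a Euclidean Jordan algebra, and tomographic locality forces its dimension to equal $d_\mathsf{a} d_\mathsf{b}$. Testing this multiplicative constraint already strains the non-complex families, since $H_n(\mathbb{R})$ has dimension $n(n+1)/2$, $H_n(\mathbb{H})$ has dimension $n(2n-1)$, and $H_3(\mathbb{O})$ has dimension $27$, whereas $H_n(\mathbb{C})$ has dimension $n^2$ and composes consistently as $(n_\mathsf{a} n_\mathsf{b})^2 = n_\mathsf{a}^2 n_\mathsf{b}^2$. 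Dimension counting by itself is not conclusive, however, so the argument proceeds by composing each candidate system with the given qubit $H_2(\mathbb{C})$ and asking whether the result can again be a locally tomographic Euclidean Jordan algebra.

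The hard part will be turning ``composes consistently'' into a theorem, because the composite is not merely a tensor product of vector spaces: it must carry a Jordan product compatible with the theory's correlations and with the factorized self-dual inner product. This is exactly where the theory of Jordan-algebra tensor products and their universal associative enveloping C*-algebras is needed. One shows that a locally tomographic Jordan composite forces each factor to be \emph{special}, that is, embeddable in an associative algebra, and then that composition with a genuine complex qubit forces the complex structure to propagate, ruling out the real, quaternionic, and higher spin-factor cases. The exceptional algebra $H_3(\mathbb{O})$ is the most delicate, since it is not special and so admits no associative envelope; it therefore cannot appear as a tomographic tensor factor at all. Controlling these tensor products uniformly is the technical core of Barnum and Wilce's Proposition 1.1 \cite{barnum_local_2014}.
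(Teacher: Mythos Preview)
The paper does not give its own proof of this theorem at all: it simply cites Proposition~1.1 of \cite{barnum_local_2014} as an external result and then remarks that the original proof, phrased for no-signaling theories with definite causal structure, carries over to the present framework because the only use of no-signaling was to obtain bilinearity of the correlation map, which here is already built into Postulate~\ref{ps:l}. So there is nothing in the paper to compare your argument against except the observation that the Barnum--Wilce proof transfers.

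That said, your outline is a faithful reconstruction of the strategy behind the cited result: Koecher--Vinberg to pass from homogeneous self-dual cones to Euclidean Jordan algebras, the Jordan--von Neumann--Wigner classification to reduce to the five simple families, and then tomographic locality together with the presence of a single complex qubit to eliminate the real, quaternionic, spin-factor, and exceptional cases. You also correctly flag that dimension counting alone is insufficient and that the genuine work lies in controlling Jordan-algebraic composites (specialness, associative envelopes, and propagation of the complex structure), which is precisely what Proposition~1.1 of \cite{barnum_local_2014} establishes. In short, your proposal goes beyond what the paper itself attempts, but it is the right sketch of the underlying argument the paper is invoking as a black box.
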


The theorem was originally obtained in the context of no-signaling probabilistic theories with definite causal structure. However, the proof of the theorem also goes through in the present context as allowing indefinite causal structure does not affect the proof and no-signaling was only used to prove that maps of the form $(\mathsf{A_a},\mathsf{B_b})\mapsto \mathsf{C^{ab}(A_a, B_b)}$ are bilinear, which holds automatically in our framework. In the theorem, \textit{factorizably self-dual} means that the self-dualizing inner product factors on two subsystems, i.e., \(\langle \mathsf{A_a B_b, X_a Y_b} \rangle=\langle \mathsf{A_a , X_a} \rangle \langle \mathsf{B_b , Y_b} \rangle\). This property holds for the self-dualizing product in Theorem \ref{th:sd} if we pick the pairing to be factorizable, as allowed by Postulate \ref{ps:p}. This plus Postulates \ref{ps:hm}, \ref{ps:qubit} and Theorem \ref{th:tl} leads to the following result.
\begin{corollary*}\label{cor:qt}
All operational space elements are self-adjoint parts of complex matrices.
\end{corollary*}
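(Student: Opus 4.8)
The plan is to recognize this corollary as an assembly result: all the substantive mathematics is carried by the Barnum-Wilce Theorem, so the task reduces to checking that each of its four hypotheses---homogeneity, factorizable self-duality, tomographic locality, and the presence of a qubit---holds in the present framework, and then invoking the theorem. Three of these are immediate. Homogeneity is precisely Postulate \ref{ps:hm}. Tomographic locality, in the form $d_{\mathsf{ab}}=d_{\mathsf{a}}d_{\mathsf{b}}$ together with the matching statement for correlation dimensions, is Theorem \ref{th:tl}. The existence of a qubit is Postulate \ref{ps:qubit}.

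The only hypothesis needing an argument is factorizable self-duality. Theorem \ref{th:sd} already furnishes self-duality of every $\mathfrak{O}_{\mathsf{a}}$ through the inner product $\langle \mathsf{A_a,B_a} \rangle=\mathsf{C^{aa'}}(\mathsf{A_a},\mathsf{B'_{a'}})$ associated with a symmetric distinguishing pairing. To upgrade this to the \emph{factorizable} self-duality demanded by Barnum-Wilce, I would use the freedom granted by Postulate \ref{ps:p} to select, for each composite system $\mathsf{ab}$, a pairing $\mathsf{C^{(ab)(ab)'}}$ that factorizes into pairings on the two subsystems. Evaluating the induced inner product on a pair of product elements and applying the factorization of the pairing yields
\begin{align}
\langle \mathsf{A_a B_b,X_a Y_b} \rangle
&=\mathsf{C}_1^{\mathsf{aa'}}(\mathsf{A_a},\mathsf{X'_{a'}})\,\mathsf{C}_2^{\mathsf{bb'}}(\mathsf{B_b},\mathsf{Y'_{b'}})\nonumber\\
&=\langle \mathsf{A_a,X_a} \rangle\langle \mathsf{B_b,Y_b} \rangle,
\end{align}
which is exactly the required factorization. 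Since the definition of factorizable self-duality is phrased on product elements only, no further extension is needed.

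The point I expect to require the most care---though it is mild---is the consistency of the pieces: the factor pairings $\mathsf{C}_1^{\mathsf{aa'}}$ and $\mathsf{C}_2^{\mathsf{bb'}}$ extracted from the composite pairing must themselves be legitimate self-dualizing pairings for $\mathfrak{O}_{\mathsf{a}}$ and $\mathfrak{O}_{\mathsf{b}}$, so that the two inner products on the right-hand side above genuinely coincide with the subsystem self-dualizing products used in Theorem \ref{th:sd}. Postulate \ref{ps:p} secures this, because it requires the factorized pairings to remain symmetric and distinguishing, whereupon the argument of Theorem \ref{th:sd} certifies each as self-dualizing for its factor. With factorizable self-duality in hand, all four hypotheses of the Barnum-Wilce Theorem are met, and the theorem delivers that every system is the self-adjoint part of a complex matrix algebra, which is precisely the assertion that all operational space elements are self-adjoint parts of complex matrices.
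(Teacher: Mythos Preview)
Your proposal is correct and follows essentially the same approach as the paper: verify the four hypotheses of the Barnum-Wilce Theorem---homogeneity from Postulate~\ref{ps:hm}, tomographic locality from Theorem~\ref{th:tl}, the qubit from Postulate~\ref{ps:qubit}, and factorizable self-duality by choosing a factorizable pairing (Postulate~\ref{ps:p}) in the construction of Theorem~\ref{th:sd}---and then invoke the theorem. Your treatment is in fact slightly more explicit than the paper's, in that you spell out the factorization computation and note that the factor pairings remain symmetric and distinguishing so as to self-dualize the subsystems consistently.
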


\section{Conclusion}\label{sec:con}

We presented a general framework for physical theories that does not assume definite causal structure. This framework takes operations and correlations as the central concepts. We further identified a list of postulates from which finite dimensional complex Hilbert space quantum theories can be derived. This may be viewed as an axiomatic formulation of quantum theories without assuming indefinite causal structure.

More than one quantum theory is compatible with the postulates. The compatible theories include both quantum theories with explicit indefinite causal structure (e.g.,  \cite{chiribella2013quantum, oreshkov2012quantum, oreshkov2015operational, oreshkov2016operational, silva2017connecting}), and ordinary formulations of quantum theory with definite causal structure (definite causality can be imposed as a further postulate). This leads to the interesting question if one among these many compatible theories describes nature best.

The framework presented in Section \ref{sec:oc} allows infinite dimensional systems\footnote{Finite dimensional spaces are used to motivate Postulate \ref{ps:l}, but as stated the postulate does not impose finite dimensionality.}, and can in principle incorporate infinite dimensional theories such as quantum field theory. It is an interesting open question to identify postulates that derive infinite dimensional quantum theory without assuming definite causal structure.

\begin{acknowledgements}
The author thanks Lucien Hardy and Achim Kempf for guidance and support, Alexander Wilce and Matthew Leifer for making suggestions and pointing out flaws in an earlier version of the project, and Ognyan Oreshkov, Carlo Maria Scandolo, Lee Smolin, and Matthew Graydon for discussion.

Research at Perimeter Institute is supported by the Government of Canada through the Department of Innovation, Science and Economic Development Canada and by the Province of Ontario through the Ministry of Research, Innovation and Science.  This work is partly supported by a grant from the John Templeton Foundation. The opinions expressed in this work are those of the author's and do not necessarily reflect the views of the John Templeton Foundation.
\end{acknowledgements}

\bibliographystyle{apsrev}
\bibliography{newer}

\begin{thebibliography}{43}
\expandafter\ifx\csname natexlab\endcsname\relax\def\natexlab#1{#1}\fi
\expandafter\ifx\csname bibnamefont\endcsname\relax
  \def\bibnamefont#1{#1}\fi
\expandafter\ifx\csname bibfnamefont\endcsname\relax
  \def\bibfnamefont#1{#1}\fi
\expandafter\ifx\csname citenamefont\endcsname\relax
  \def\citenamefont#1{#1}\fi
\expandafter\ifx\csname url\endcsname\relax
  \def\url#1{\texttt{#1}}\fi
\expandafter\ifx\csname urlprefix\endcsname\relax\def\urlprefix{URL }\fi
\providecommand{\bibinfo}[2]{#2}
\providecommand{\eprint}[2][]{\url{#2}}

\bibitem[{\citenamefont{MacLean et~al.}(2017)\citenamefont{MacLean, Ried,
  Spekkens, and Resch}}]{maclean2017quantum}
\bibinfo{author}{\bibfnamefont{J.-P.~W.} \bibnamefont{MacLean}},
  \bibinfo{author}{\bibfnamefont{K.}~\bibnamefont{Ried}},
  \bibinfo{author}{\bibfnamefont{R.~W.} \bibnamefont{Spekkens}},
  \bibnamefont{and} \bibinfo{author}{\bibfnamefont{K.~J.} \bibnamefont{Resch}},
  \bibinfo{journal}{Nature Communications} \textbf{\bibinfo{volume}{8}}
  (\bibinfo{year}{2017}).

\bibitem[{\citenamefont{Procopio et~al.}(2015)\citenamefont{Procopio, Moqanaki,
  Ara{\'u}jo, Costa, Calafell, Dowd, Hamel, Rozema, Brukner, and
  Walther}}]{procopio2015experimental}
\bibinfo{author}{\bibfnamefont{L.~M.} \bibnamefont{Procopio}},
  \bibinfo{author}{\bibfnamefont{A.}~\bibnamefont{Moqanaki}},
  \bibinfo{author}{\bibfnamefont{M.}~\bibnamefont{Ara{\'u}jo}},
  \bibinfo{author}{\bibfnamefont{F.}~\bibnamefont{Costa}},
  \bibinfo{author}{\bibfnamefont{I.~A.} \bibnamefont{Calafell}},
  \bibinfo{author}{\bibfnamefont{E.~G.} \bibnamefont{Dowd}},
  \bibinfo{author}{\bibfnamefont{D.~R.} \bibnamefont{Hamel}},
  \bibinfo{author}{\bibfnamefont{L.~A.} \bibnamefont{Rozema}},
  \bibinfo{author}{\bibfnamefont{{\v{C}}.}~\bibnamefont{Brukner}},
  \bibnamefont{and} \bibinfo{author}{\bibfnamefont{P.}~\bibnamefont{Walther}},
  \bibinfo{journal}{Nature Communications} \textbf{\bibinfo{volume}{6}},
  \bibinfo{pages}{7913} (\bibinfo{year}{2015}).

\bibitem[{\citenamefont{Rubino et~al.}(2017)\citenamefont{Rubino, Rozema, Feix,
  Ara{\'u}jo, Zeuner, Procopio, Brukner, and Walther}}]{rubino2017experimental}
\bibinfo{author}{\bibfnamefont{G.}~\bibnamefont{Rubino}},
  \bibinfo{author}{\bibfnamefont{L.~A.} \bibnamefont{Rozema}},
  \bibinfo{author}{\bibfnamefont{A.}~\bibnamefont{Feix}},
  \bibinfo{author}{\bibfnamefont{M.}~\bibnamefont{Ara{\'u}jo}},
  \bibinfo{author}{\bibfnamefont{J.~M.} \bibnamefont{Zeuner}},
  \bibinfo{author}{\bibfnamefont{L.~M.} \bibnamefont{Procopio}},
  \bibinfo{author}{\bibfnamefont{{\v{C}}.}~\bibnamefont{Brukner}},
  \bibnamefont{and} \bibinfo{author}{\bibfnamefont{P.}~\bibnamefont{Walther}},
  \bibinfo{journal}{Science Advances} \textbf{\bibinfo{volume}{3}},
  \bibinfo{pages}{e1602589} (\bibinfo{year}{2017}).

\bibitem[{\citenamefont{Hardy}(2009)}]{hardy2009quantum}
\bibinfo{author}{\bibfnamefont{L.}~\bibnamefont{Hardy}}, in
  \emph{\bibinfo{booktitle}{Quantum Reality, Relativistic Causality, and
  Closing the Epistemic Circle}} (\bibinfo{publisher}{Springer},
  \bibinfo{year}{2009}), pp. \bibinfo{pages}{379--401}.

\bibitem[{\citenamefont{Chiribella et~al.}(2013)\citenamefont{Chiribella,
  D’Ariano, Perinotti, and Valiron}}]{chiribella2013quantum}
\bibinfo{author}{\bibfnamefont{G.}~\bibnamefont{Chiribella}},
  \bibinfo{author}{\bibfnamefont{G.~M.} \bibnamefont{D’Ariano}},
  \bibinfo{author}{\bibfnamefont{P.}~\bibnamefont{Perinotti}},
  \bibnamefont{and} \bibinfo{author}{\bibfnamefont{B.}~\bibnamefont{Valiron}},
  \bibinfo{journal}{Physical Review A} \textbf{\bibinfo{volume}{88}},
  \bibinfo{pages}{022318} (\bibinfo{year}{2013}).

\bibitem[{\citenamefont{Chiribella}(2012)}]{chiribella2012perfect}
\bibinfo{author}{\bibfnamefont{G.}~\bibnamefont{Chiribella}},
  \bibinfo{journal}{Physical Review A} \textbf{\bibinfo{volume}{86}},
  \bibinfo{pages}{040301} (\bibinfo{year}{2012}).

\bibitem[{\citenamefont{Ara{\'u}jo et~al.}(2014)\citenamefont{Ara{\'u}jo,
  Costa, and Brukner}}]{araujo2014computational}
\bibinfo{author}{\bibfnamefont{M.}~\bibnamefont{Ara{\'u}jo}},
  \bibinfo{author}{\bibfnamefont{F.}~\bibnamefont{Costa}}, \bibnamefont{and}
  \bibinfo{author}{\bibfnamefont{{\v{C}}.}~\bibnamefont{Brukner}},
  \bibinfo{journal}{Physical Review Letters} \textbf{\bibinfo{volume}{113}},
  \bibinfo{pages}{250402} (\bibinfo{year}{2014}).

\bibitem[{\citenamefont{Feix et~al.}(2015)\citenamefont{Feix, Ara{\'u}jo, and
  Brukner}}]{feix2015quantum}
\bibinfo{author}{\bibfnamefont{A.}~\bibnamefont{Feix}},
  \bibinfo{author}{\bibfnamefont{M.}~\bibnamefont{Ara{\'u}jo}},
  \bibnamefont{and}
  \bibinfo{author}{\bibfnamefont{{\v{C}}.}~\bibnamefont{Brukner}},
  \bibinfo{journal}{Physical Review A} \textbf{\bibinfo{volume}{92}},
  \bibinfo{pages}{052326} (\bibinfo{year}{2015}).

\bibitem[{\citenamefont{Gu{\'e}rin et~al.}(2016)\citenamefont{Gu{\'e}rin, Feix,
  Ara{\'u}jo, and Brukner}}]{guerin2016exponential}
\bibinfo{author}{\bibfnamefont{P.~A.} \bibnamefont{Gu{\'e}rin}},
  \bibinfo{author}{\bibfnamefont{A.}~\bibnamefont{Feix}},
  \bibinfo{author}{\bibfnamefont{M.}~\bibnamefont{Ara{\'u}jo}},
  \bibnamefont{and}
  \bibinfo{author}{\bibfnamefont{{\v{C}}.}~\bibnamefont{Brukner}},
  \bibinfo{journal}{Physical review letters} \textbf{\bibinfo{volume}{117}},
  \bibinfo{pages}{100502} (\bibinfo{year}{2016}).

\bibitem[{\citenamefont{Nielsen and Chuang}(2000)}]{nielsen2000quantum}
\bibinfo{author}{\bibfnamefont{M.~A.} \bibnamefont{Nielsen}} \bibnamefont{and}
  \bibinfo{author}{\bibfnamefont{I.~L.} \bibnamefont{Chuang}},
  \emph{\bibinfo{title}{Quantum Computation and Quantum Information}}
  (\bibinfo{publisher}{Cambridge University Press}, \bibinfo{year}{2000}).

\bibitem[{\citenamefont{Hardy}({\natexlab{a}})}]{hardy2005probability}
\bibinfo{author}{\bibfnamefont{L.}~\bibnamefont{Hardy}},
  \eprint{arXiv:gr-qc/0509120}.

\bibitem[{\citenamefont{Hardy}(2007)}]{hardy2007towards}
\bibinfo{author}{\bibfnamefont{L.}~\bibnamefont{Hardy}},
  \bibinfo{journal}{Journal of Physics A: Mathematical and Theoretical}
  \textbf{\bibinfo{volume}{40}}, \bibinfo{pages}{3081} (\bibinfo{year}{2007}).

\bibitem[{\citenamefont{Oreshkov et~al.}(2012)\citenamefont{Oreshkov, Costa,
  and Brukner}}]{oreshkov2012quantum}
\bibinfo{author}{\bibfnamefont{O.}~\bibnamefont{Oreshkov}},
  \bibinfo{author}{\bibfnamefont{F.}~\bibnamefont{Costa}}, \bibnamefont{and}
  \bibinfo{author}{\bibfnamefont{{\v{C}}.}~\bibnamefont{Brukner}},
  \bibinfo{journal}{Nature Communications} \textbf{\bibinfo{volume}{3}},
  \bibinfo{pages}{1092} (\bibinfo{year}{2012}).

\bibitem[{\citenamefont{Oreshkov and Cerf}(2016)}]{oreshkov2016operational}
\bibinfo{author}{\bibfnamefont{O.}~\bibnamefont{Oreshkov}} \bibnamefont{and}
  \bibinfo{author}{\bibfnamefont{N.~J.} \bibnamefont{Cerf}},
  \bibinfo{journal}{New Journal of Physics} \textbf{\bibinfo{volume}{18}},
  \bibinfo{pages}{073037} (\bibinfo{year}{2016}).

\bibitem[{\citenamefont{Stueckelberg}(1960)}]{stueckelberg1960quantum}
\bibinfo{author}{\bibfnamefont{E.~C.} \bibnamefont{Stueckelberg}},
  \bibinfo{journal}{Helv. Phys. Acta} \textbf{\bibinfo{volume}{33}},
  \bibinfo{pages}{458} (\bibinfo{year}{1960}).

\bibitem[{\citenamefont{Auletta}(2001)}]{auletta2001foundations}
\bibinfo{author}{\bibfnamefont{G.}~\bibnamefont{Auletta}},
  \emph{\bibinfo{title}{Foundations and Interpretation of Quantum Mechanics: In
  the Light of a Critical-Historical Analysis of the Problems and of a
  Synthesis of the Results}} (\bibinfo{publisher}{World Scientific},
  \bibinfo{year}{2001}).

\bibitem[{\citenamefont{Chiribella and Spekkens}(2016)}]{chiribella2016quantum}
\bibinfo{editor}{\bibfnamefont{G.}~\bibnamefont{Chiribella}} \bibnamefont{and}
  \bibinfo{editor}{\bibfnamefont{R.~W.} \bibnamefont{Spekkens}}, eds.,
  \emph{\bibinfo{title}{Quantum Theory: Informational Foundations and Foils}}
  (\bibinfo{publisher}{Springer}, \bibinfo{year}{2016}).

\bibitem[{\citenamefont{Popescu and Rohrlich}(1994)}]{popescu1994quantum}
\bibinfo{author}{\bibfnamefont{S.}~\bibnamefont{Popescu}} \bibnamefont{and}
  \bibinfo{author}{\bibfnamefont{D.}~\bibnamefont{Rohrlich}},
  \bibinfo{journal}{Foundations of Physics} \textbf{\bibinfo{volume}{24}},
  \bibinfo{pages}{379} (\bibinfo{year}{1994}).

\bibitem[{\citenamefont{Barrett}(2007)}]{barrett2007information}
\bibinfo{author}{\bibfnamefont{J.}~\bibnamefont{Barrett}},
  \bibinfo{journal}{Physical Review A} \textbf{\bibinfo{volume}{75}},
  \bibinfo{pages}{032304} (\bibinfo{year}{2007}).

\bibitem[{\citenamefont{Hardy}({\natexlab{b}})}]{hardy2001quantum}
\bibinfo{author}{\bibfnamefont{L.}~\bibnamefont{Hardy}},
  \eprint{arXiv:quant-ph/0101012}.

\bibitem[{\citenamefont{Barnum et~al.}(2014)\citenamefont{Barnum, Markus,
  Ududec et~al.}}]{barnum_higher-order_2014}
\bibinfo{author}{\bibfnamefont{H.}~\bibnamefont{Barnum}},
  \bibinfo{author}{\bibfnamefont{P.}~\bibnamefont{Markus}},
  \bibinfo{author}{\bibfnamefont{C.}~\bibnamefont{Ududec}},
  \bibnamefont{et~al.}, \bibinfo{journal}{New Journal of Physics}
  \textbf{\bibinfo{volume}{16}}, \bibinfo{pages}{123029}
  (\bibinfo{year}{2014}).

\bibitem[{\citenamefont{Chiribella et~al.}(2011)\citenamefont{Chiribella,
  D’Ariano, and Perinotti}}]{chiribella2011informational}
\bibinfo{author}{\bibfnamefont{G.}~\bibnamefont{Chiribella}},
  \bibinfo{author}{\bibfnamefont{G.~M.} \bibnamefont{D’Ariano}},
  \bibnamefont{and}
  \bibinfo{author}{\bibfnamefont{P.}~\bibnamefont{Perinotti}},
  \bibinfo{journal}{Physical Review A} \textbf{\bibinfo{volume}{84}},
  \bibinfo{pages}{012311} (\bibinfo{year}{2011}).

\bibitem[{\citenamefont{Daki{\'c} and Brukner}(2011)}]{daki?_quantum_2011}
\bibinfo{author}{\bibfnamefont{B.}~\bibnamefont{Daki{\'c}}} \bibnamefont{and}
  \bibinfo{author}{\bibfnamefont{{\v C}.}~\bibnamefont{Brukner}}, in
  \emph{\bibinfo{booktitle}{Deep Beauty: Understanding the Quantum World
  through Mathematical Innovation}}, edited by
  \bibinfo{editor}{\bibfnamefont{H.}~\bibnamefont{Halvorson}}
  (\bibinfo{publisher}{Cambridge University Press}, \bibinfo{year}{2011}), pp.
  \bibinfo{pages}{365--392}.

\bibitem[{\citenamefont{Hardy}({\natexlab{c}})}]{hardy2011reformulating}
\bibinfo{author}{\bibfnamefont{L.}~\bibnamefont{Hardy}},
  \eprint{arXiv:1104.2066}.

\bibitem[{\citenamefont{Masanes and
  M{\"u}ller}(2011)}]{masanes_derivation_2011}
\bibinfo{author}{\bibfnamefont{L.}~\bibnamefont{Masanes}} \bibnamefont{and}
  \bibinfo{author}{\bibfnamefont{M.~P.} \bibnamefont{M{\"u}ller}},
  \bibinfo{journal}{New Journal of Physics} \textbf{\bibinfo{volume}{13}},
  \bibinfo{pages}{063001} (\bibinfo{year}{2011}).

\bibitem[{\citenamefont{Wilce}(2012)}]{wilce2012four}
\bibinfo{author}{\bibfnamefont{A.}~\bibnamefont{Wilce}}, in
  \emph{\bibinfo{booktitle}{Probability in Physics}}, edited by
  \bibinfo{editor}{\bibfnamefont{Y.}~\bibnamefont{Ben-Menahem}}
  \bibnamefont{and} \bibinfo{editor}{\bibfnamefont{M.}~\bibnamefont{Hemmo}}
  (\bibinfo{publisher}{Springer}, \bibinfo{year}{2012}), pp.
  \bibinfo{pages}{281--298}.

\bibitem[{\citenamefont{Wilce}()}]{wilce_conjugates_2012}
\bibinfo{author}{\bibfnamefont{A.}~\bibnamefont{Wilce}},
  \eprint{arXiv:1206.2897}.

\bibitem[{\citenamefont{Selby et~al.}()\citenamefont{Selby, Scandolo, and
  Coecke}}]{selby2018reconstructing}
\bibinfo{author}{\bibfnamefont{J.~H.} \bibnamefont{Selby}},
  \bibinfo{author}{\bibfnamefont{C.~M.} \bibnamefont{Scandolo}},
  \bibnamefont{and} \bibinfo{author}{\bibfnamefont{B.}~\bibnamefont{Coecke}},
  \eprint{arXiv:1802.00367}.

\bibitem[{\citenamefont{Barnum and Wilce}(2016)}]{barnum2016post}
\bibinfo{author}{\bibfnamefont{H.}~\bibnamefont{Barnum}} \bibnamefont{and}
  \bibinfo{author}{\bibfnamefont{A.}~\bibnamefont{Wilce}}, in
  \emph{\bibinfo{booktitle}{Quantum Theory: Informational Foundations and
  Foils}}, edited by
  \bibinfo{editor}{\bibfnamefont{G.}~\bibnamefont{Chiribella}}
  \bibnamefont{and} \bibinfo{editor}{\bibfnamefont{R.~W.}
  \bibnamefont{Spekkens}} (\bibinfo{publisher}{Springer},
  \bibinfo{year}{2016}), pp. \bibinfo{pages}{367--420}.

\bibitem[{\citenamefont{Davies and Lewis}(1970)}]{davies1970operational}
\bibinfo{author}{\bibfnamefont{E.~B.} \bibnamefont{Davies}} \bibnamefont{and}
  \bibinfo{author}{\bibfnamefont{J.~T.} \bibnamefont{Lewis}},
  \bibinfo{journal}{Communications in Mathematical Physics}
  \textbf{\bibinfo{volume}{17}}, \bibinfo{pages}{239} (\bibinfo{year}{1970}).

\bibitem[{\citenamefont{Oreshkov and Cerf}(2015)}]{oreshkov2015operational}
\bibinfo{author}{\bibfnamefont{O.}~\bibnamefont{Oreshkov}} \bibnamefont{and}
  \bibinfo{author}{\bibfnamefont{N.~J.} \bibnamefont{Cerf}},
  \bibinfo{journal}{Nature Physics} \textbf{\bibinfo{volume}{11}},
  \bibinfo{pages}{853} (\bibinfo{year}{2015}).

\bibitem[{\citenamefont{Chiribella et~al.}(2010)\citenamefont{Chiribella,
  D’Ariano, and Perinotti}}]{chiribella2010probabilistic}
\bibinfo{author}{\bibfnamefont{G.}~\bibnamefont{Chiribella}},
  \bibinfo{author}{\bibfnamefont{G.~M.} \bibnamefont{D’Ariano}},
  \bibnamefont{and}
  \bibinfo{author}{\bibfnamefont{P.}~\bibnamefont{Perinotti}},
  \bibinfo{journal}{Physical Review A} \textbf{\bibinfo{volume}{81}},
  \bibinfo{pages}{062348} (\bibinfo{year}{2010}).

\bibitem[{\citenamefont{Ara{\'u}jo et~al.}(2015)\citenamefont{Ara{\'u}jo,
  Branciard, Costa, Feix, Giarmatzi, and Brukner}}]{araujo2015witnessing}
\bibinfo{author}{\bibfnamefont{M.}~\bibnamefont{Ara{\'u}jo}},
  \bibinfo{author}{\bibfnamefont{C.}~\bibnamefont{Branciard}},
  \bibinfo{author}{\bibfnamefont{F.}~\bibnamefont{Costa}},
  \bibinfo{author}{\bibfnamefont{A.}~\bibnamefont{Feix}},
  \bibinfo{author}{\bibfnamefont{C.}~\bibnamefont{Giarmatzi}},
  \bibnamefont{and}
  \bibinfo{author}{\bibfnamefont{{\v{C}}.}~\bibnamefont{Brukner}},
  \bibinfo{journal}{New Journal of Physics} \textbf{\bibinfo{volume}{17}},
  \bibinfo{pages}{102001} (\bibinfo{year}{2015}).

\bibitem[{\citenamefont{Oreshkov and Giarmatzi}(2016)}]{oreshkov2016causal}
\bibinfo{author}{\bibfnamefont{O.}~\bibnamefont{Oreshkov}} \bibnamefont{and}
  \bibinfo{author}{\bibfnamefont{C.}~\bibnamefont{Giarmatzi}},
  \bibinfo{journal}{New Journal of Physics} \textbf{\bibinfo{volume}{18}},
  \bibinfo{pages}{093020} (\bibinfo{year}{2016}).

\bibitem[{\citenamefont{Jia and Sakharwade}(2018)}]{jia2018tensor}
\bibinfo{author}{\bibfnamefont{D.}~\bibnamefont{Jia}} \bibnamefont{and}
  \bibinfo{author}{\bibfnamefont{N.}~\bibnamefont{Sakharwade}},
  \bibinfo{journal}{Physical Review A} \textbf{\bibinfo{volume}{97}},
  \bibinfo{pages}{032110} (\bibinfo{year}{2018}).

\bibitem[{\citenamefont{Hardy}(2013)}]{hardy2013theory}
\bibinfo{author}{\bibfnamefont{L.}~\bibnamefont{Hardy}}, in
  \emph{\bibinfo{booktitle}{Computation, Logic, Games, and Quantum Foundations.
  The Many Facets of Samson Abramsky}} (\bibinfo{publisher}{Springer},
  \bibinfo{year}{2013}), pp. \bibinfo{pages}{83--106}.

\bibitem[{\citenamefont{Coecke and Kissinger}(2017)}]{coecke2017picturing}
\bibinfo{author}{\bibfnamefont{B.}~\bibnamefont{Coecke}} \bibnamefont{and}
  \bibinfo{author}{\bibfnamefont{A.}~\bibnamefont{Kissinger}},
  \emph{\bibinfo{title}{Picturing quantum processes}}
  (\bibinfo{publisher}{Cambridge University Press}, \bibinfo{year}{2017}).

\bibitem[{\citenamefont{Chiribella et~al.}(2009)\citenamefont{Chiribella,
  D’Ariano, and Perinotti}}]{chiribella2009theoretical}
\bibinfo{author}{\bibfnamefont{G.}~\bibnamefont{Chiribella}},
  \bibinfo{author}{\bibfnamefont{G.~M.} \bibnamefont{D’Ariano}},
  \bibnamefont{and}
  \bibinfo{author}{\bibfnamefont{P.}~\bibnamefont{Perinotti}},
  \bibinfo{journal}{Physical Review A} \textbf{\bibinfo{volume}{80}},
  \bibinfo{pages}{022339} (\bibinfo{year}{2009}).

\bibitem[{\citenamefont{Koecher}(1958)}]{koecher_geodattischen_1958}
\bibinfo{author}{\bibfnamefont{M.}~\bibnamefont{Koecher}},
  \bibinfo{journal}{Mathematische Annalen} \textbf{\bibinfo{volume}{135}},
  \bibinfo{pages}{192} (\bibinfo{year}{1958}).

\bibitem[{\citenamefont{Vinberg}(1961)}]{vinberg1961homogeneous}
\bibinfo{author}{\bibfnamefont{E.}~\bibnamefont{Vinberg}},
  \bibinfo{journal}{Dokl. Acad. Nauk. SSSR} \textbf{\bibinfo{volume}{141}},
  \bibinfo{pages}{270} (\bibinfo{year}{1961}), \bibinfo{note}{(English trans.
  Soviet Math. Dokl 2, 1416 (1961))}.

\bibitem[{\citenamefont{Jordan et~al.}(1934)\citenamefont{Jordan, Neumann, and
  Wigner}}]{jordan_algebraic_1934}
\bibinfo{author}{\bibfnamefont{P.}~\bibnamefont{Jordan}},
  \bibinfo{author}{\bibfnamefont{J.~v.} \bibnamefont{Neumann}},
  \bibnamefont{and} \bibinfo{author}{\bibfnamefont{E.}~\bibnamefont{Wigner}},
  \bibinfo{journal}{Annals of Mathematics} \textbf{\bibinfo{volume}{35}},
  \bibinfo{pages}{29} (\bibinfo{year}{1934}).

\bibitem[{\citenamefont{Barnum and Wilce}(2014)}]{barnum_local_2014}
\bibinfo{author}{\bibfnamefont{H.}~\bibnamefont{Barnum}} \bibnamefont{and}
  \bibinfo{author}{\bibfnamefont{A.}~\bibnamefont{Wilce}},
  \bibinfo{journal}{Foundations of Physics} \textbf{\bibinfo{volume}{44}},
  \bibinfo{pages}{192} (\bibinfo{year}{2014}).

\bibitem[{\citenamefont{Silva et~al.}(2017)\citenamefont{Silva, Guryanova,
  Short, Skrzypczyk, Brunner, and Popescu}}]{silva2017connecting}
\bibinfo{author}{\bibfnamefont{R.}~\bibnamefont{Silva}},
  \bibinfo{author}{\bibfnamefont{Y.}~\bibnamefont{Guryanova}},
  \bibinfo{author}{\bibfnamefont{A.~J.} \bibnamefont{Short}},
  \bibinfo{author}{\bibfnamefont{P.}~\bibnamefont{Skrzypczyk}},
  \bibinfo{author}{\bibfnamefont{N.}~\bibnamefont{Brunner}}, \bibnamefont{and}
  \bibinfo{author}{\bibfnamefont{S.}~\bibnamefont{Popescu}},
  \bibinfo{journal}{New Journal of Physics} \textbf{\bibinfo{volume}{19}},
  \bibinfo{pages}{103022} (\bibinfo{year}{2017}).

\end{thebibliography}

\end{document}